\newtheorem{theorem}{Theorem}
\newtheorem{lemma}[theorem]{Lemma}
\newtheorem{corollary}[theorem]{Corollary}
\newtheorem{observation}[theorem]{Observation}
\newcommand{\N}{\mathbb{N}}
\newcommand{\Q}{\mathbb{Q}}
\newcommand{\R}{\mathbb{R}}
\newcommand{\Z}{\mathbb{Z}}
\DeclareMathOperator{\poly}{poly}
\begin{document}

\title{Lines in Every Direction with No ee-Random Points}

\author{Neil Lutz}
\author{Spencer Park Martin}
\author{Rain White}
\affil{Swarthmore College}

\date{}

\maketitle

\begin{abstract}
    We prove that in every direction in the Euclidean plane, there exists a line containing no double exponential time random (ee-random) points. This means each point on these lines has an algorithmically predictable location, to the extent that a gambler in an environment with fair payouts can, using double exponential time computing resources, amass unbounded capital placing bets on increasingly precise estimates of the point's location. Our proof relies on effectivizing the construction of the lineal extension of a Kakeya set. This resolves an open question of Lutz and Lutz (2015).
\end{abstract}

\section{Introduction}
    The theory of algorithmic information is fruitfully applied in continuous geometric spaces by regarding individual points as infinite data streams and quantifying the predictability or compressibility of those streams according to various notions of algorithmic randomness or dimension. There has been significant recent work using this type of analysis to study points on lines~\cite{LutLut15,LutLut18,LutStu20,LutStu22,Stul25,BusFie25}. In this paper, we show that in every direction in $\R^2$, there is a line on which no point is \emph{double exponential time random} (or ee-\emph{random}). We define this algorithmic randomness notion in Section~\ref{ssec:martingalesmeasure}, but informally it means that every point on the line is at least slightly predictable to an algorithm running in double exponential time. Thus a gambler with double exponential time computing power could, in an environment with fair payouts, accumulate unbounded capital by placing bets on the location of the point.
    
    This paper resolves an open question posed in 2015 by Lutz and Lutz~\cite{LutLut15}, which addressed two related questions by proving the existence of lines in all directions containing no computably random points and line segments in every direction containing no double exponential time random points. The authors asked whether the latter result could be extended to include full lines. Such extensions are often nontrivial in geometric measure theory. Keleti~\cite{Kele16} proved, for example, that given any union of line segments in $\R^2$, taking the \emph{lineal extension}---the union of all lines containing these segments---does not change the Hausdorff dimension, but it is open whether this holds in higher-dimensional Euclidean spaces. Both the results of~\cite{LutLut15} and our new result are obtained by effectivizing constructions of different \emph{Kakeya sets}.
        
    Algorithmic information theoretic analysis of points on lines is closely tied to Kakeya sets. In this paper, a Kakeya set is a set in a Euclidean space $\R^n$ that contains unit-length line segments in all directions. Besicovitch~\cite{Besi19,Besi28} proved in 1919 that such a set can have Lebesgue measure 0. Revisiting the problem decades later, Besicovitch~\cite{Besi64} in 1964 gave a simpler construction of a set with Lebesgue measure 0 that contains full lines in all directions, and Davies~\cite{Davi71} proved in 1971 that every Kakeya set in $\R^2$ has Hausdorff dimension 2. Since then, there has been a prolific line of research studying the properties of Kakeya sets and their lineal extensions, including a major 2025 breakthrough in which Wang and Zahl~\cite{WanZah25} proved that every Kakeya set in $\R^3$ has Hausdorff dimension 3. Our result can be equivalently stated as saying there is a Kakeya set in $\R^2$ whose lineal extension contains no double exponential time random points.

\section{Double Exponential Time Measure and Randomness in $\R^2$}
We now define double exponential time measure 0 and double exponential time randomness in $\R^2$ using martingales. These definitions generalize straightforwardly to other resource bounds and to higher-dimensional Euclidean spaces~\cite{LutLut15}.

\subsection{Dyadic Squares and Martingales}
For each $r\in\N$, the \emph{$r$-dyadic rationals} are the elements of the set $2^{-r}\Z$.

For each $r\in\N$ and $u,v\in\Z$, let
\[Q_r(u,v)=2^{-r}([u,u+1)\times[v,v+1))\]
be the (half-open, half-closed) \emph{$r$-dyadic square} at $(u,v)$, and let
\[\mathcal{Q}=\left\{Q_r(u,v):r\in\N\text{ and }u,v\in\{0,\ldots,2^r-1\}\right\}\]
be the set of all dyadic squares in $[0,1)^2$.

A \emph{martingale} on $[0,1)^2$ is a function $d:\mathcal{Q}\to[0,\infty)$ satisfying
\begin{equation}\label{eq:martingale}
    d(Q_r(u,v))=\frac{1}{4}\sum_{\alpha,\beta\in\{0,1\}}d(Q_{r+1}(2u+\alpha,2v+\beta)).
\end{equation}
for all $Q_r(u,v)\in\mathcal{Q}$. This condition places a limit on how quickly the martingale's value can grow with $r$; it implies, for each $r$-dyadic square $Q_r(u,v)$ and all $\alpha,\beta\in\{0,1\}$, that
\[d(Q_{r+1}(2u+\alpha,2v+\beta))\leq 4d(Q_r(u,v)),\]
so
\begin{equation}\label{eq:growthbound}
    d(Q_r(u,v))\leq 4^r d(Q_0(0,0)).
\end{equation}
holds for all $Q_r(u,v)\in\mathcal{Q}$. We call $d(Q_0(0,0))$ the \emph{initial capital} of the martingale $d$.

Informally, we regard a martingale as a betting strategy for a gambler placing bets on increasingly precise estimates for the location of an unknown point---which may be adversarially chosen from some known set---in an environment with fair payouts. When the point is revealed to be in the dyadic square $Q$, the gambler has amassed $d(Q)$ capital, which they now allocate among the four immediate dyadic subsquares of $Q$. The point will be revealed to be in one of those subsquares, at which point the gambler will receive capital equal to four times the amount they allocated to that subsquare. As there were four subsquares to choose from, this payout is fair. The amount allocated to the other subsquares is forfeited by the gambler, and the process is repeated ad infinitum on ever-smaller squares.

For each martingale $d$, $r\in\N$, and $x,y\in[0,1)$, define
\[d^{(r)}(x,y)=d(Q_r(\lfloor 2^rx\rfloor,\lfloor 2^ry\rfloor)),\]
the value of the martingale on the unique $r$-dyadic square containing the point $(x,y)$. The martingale $d$ \emph{succeeds} at a point $(x,y)\in[0,1)^2$ if
\[\limsup_{r\to\infty}d^{(r)}(x,y)=\infty,\]
and its \emph{success set}
\[S^\infty[d]=\left\{(x,y)\in[0,1)^2:\limsup_{r\to\infty}d^{(r)}(x,y)=\infty\right\}\]
is the set of points on which it succeeds.

\subsection{Martingales and Measure}\label{ssec:martingalesmeasure}
Originally working in the Cantor space, Ville~\cite{Vill39} proved that a set $E$ has Lebesgue measure $\lambda(E)=0$ if and only if it is contained in the success set of some martingale. This is also true for $E\subseteq[0,1)^2$, and this correspondence is extended to the entire plane by taking a union of unit square intersections with the set: Given any set $E\subseteq\R^2$, define
\begin{equation}\label{eq:hash}
    \#E=[0,1)^2\cap\bigcup_{i,j\in\Z}E-(i,j).
\end{equation}
Then $\lambda(E)=0$ if and only if there is a martingale $d$ such that $\#E\subseteq S^\infty[d]$.

Effective notions of measure 0 are obtained by placing resource bounds on this martingale. A martingale $d:\mathcal{Q}\to[0,\infty)$ is \emph{double exponential time computable} if there is a function
\[\hat{d}:\mathcal{Q}\times \N\to\Q\]
such that, for all $Q_r(u,v)\in\mathcal{Q}$ and all $t\in\N$,
\begin{equation}\label{eq:dhat}
    \big|d(Q_r(u,v))-\hat{d}(Q_r(u,v),t)\big|<2^{-t}
\end{equation}
and $\hat{d}(Q_r(u,v),t)$ is computable in time $2^{2^{O(r+t)}}$.

A set $E$ has \emph{double exponential time measure 0} (or ee\emph{-measure 0}) if there is some double exponential time computable martingale $d$ such that $\#E\subseteq S^\infty[d]$. This type of \emph{resource-bounded measure} was originally used to quantify the size of complexity classes~\cite{Lutz92} but is also useful for classifying geometric sets of Lebesgue measure 0. A point $(x,y)\in\R^2$ is \emph{double exponential time random} (or ee\emph{-random}) if it does not belong to any set of ee-measure 0.

\subsection{Open Set Martingales}\label{ssec:osm}

We will make essential use of a class of martingales defined in~\cite{LutLut15} which, for some given set $E$, for each dyadic square $Q$, allocate value to the four immediate dyadic subsquares of $Q$ proportional to the measure of each subsquare's intersection with $E$. Formally, for any set $E\subseteq[0,1)^2$, the martingale $d_E:\mathcal{Q}\to[0,\infty)$ is defined as follows.
\begin{itemize}
    \item The initial capital is
    \[
        d_E(Q_0(0,0))=
        \begin{cases}
           0&\lambda(E)=0\\
         1&\text{otherwise.}
        \end{cases}
    \]
    \item For all $r\in\N$, $u,v\in\{0,\ldots,2^r-1\}$ and $\alpha,\beta\in\{0,1\}$, the value
    \[d_E(Q_{r+1}(2u+\alpha,2v+\beta))\]
    is $0$ if $d_E(Q_r(u,v))=0$ and is otherwise given by
    \[4d_E(Q_r(u,v))\frac{\lambda(E\cap Q_{r+1}(2u+\alpha,2v+\beta))}{\lambda(E\cap Q_r(u,v))}.\]
\end{itemize}
When $E$ is an open set, $d_E$ is called the \emph{open set martingale} for $E$.

\begin{theorem}[Lutz and Lutz~\cite{LutLut15}]\label{thm:osm}
    For every nonempty set $E$ that is open as a subset of $[0,1)^2$ and every $(x,y)\in E$,
    \[d_E^{(r)}(x,y)=\frac{1}{\lambda(E)}\]
    for all sufficiently large $r$.
\end{theorem}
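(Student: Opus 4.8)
The plan is to derive an explicit closed form for $d_E$ on every dyadic square and then exploit the openness of $E$ to show that, from some point on, the dyadic square containing $(x,y)$ lies entirely inside $E$, where the closed form collapses to $1/\lambda(E)$.

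Since $E$ is nonempty and open as a subset of $[0,1)^2$, it contains a set of the form $B(p,\delta)\cap[0,1)^2$ for some $p\in E$ and $\delta>0$, which has positive Lebesgue measure, so $0<\lambda(E)\leq 1$; in particular $1/\lambda(E)$ is a well-defined positive real and $d_E(Q_0(0,0))=1$. The first step is to prove, by induction on $r$, that
\[
    d_E\big(Q_r(u,v)\big)=\frac{4^r\,\lambda\big(E\cap Q_r(u,v)\big)}{\lambda(E)}
\]
for every $Q_r(u,v)\in\mathcal Q$. The base case $r=0$ is immediate from $E\subseteq[0,1)^2$. For the inductive step I would use the two cases of the definition of $d_E$ directly: if $\lambda(E\cap Q_r(u,v))=0$, the induction hypothesis forces $d_E(Q_r(u,v))=0$, so all four children receive value $0$ by definition, and each child, being a subset of $Q_r(u,v)$, also meets $E$ in a null set, so the displayed identity holds with both sides zero; if $\lambda(E\cap Q_r(u,v))>0$, substituting the induction hypothesis into the ``otherwise'' clause and cancelling the $\lambda(E\cap Q_r(u,v))$ factors yields the identity for each child. (As a byproduct, finite additivity of $\lambda$ over the four children shows $d_E$ satisfies~\eqref{eq:martingale}, although that is not needed here.)

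The second step is geometric. Fix $(x,y)\in E$. Because $E$ is relatively open, there is $\varepsilon>0$ with $B((x,y),\varepsilon)\cap[0,1)^2\subseteq E$. For each $r$, the square $Q_r(\lfloor 2^rx\rfloor,\lfloor 2^ry\rfloor)$ lies in $[0,1)^2$ (since $x,y\in[0,1)$ give $\lfloor 2^rx\rfloor,\lfloor 2^ry\rfloor\in\{0,\dots,2^r-1\}$), and all of its points are within Euclidean distance $\sqrt2\cdot2^{-r}$ of $(x,y)$. Hence once $\sqrt2\cdot2^{-r}<\varepsilon$ we get $Q_r(\lfloor 2^rx\rfloor,\lfloor 2^ry\rfloor)\subseteq E$, so $\lambda\big(E\cap Q_r(\lfloor 2^rx\rfloor,\lfloor 2^ry\rfloor)\big)=4^{-r}$, and the closed form from the first step gives $d_E^{(r)}(x,y)=4^r\cdot4^{-r}/\lambda(E)=1/\lambda(E)$ for all such $r$.

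The proof is short, so there is no serious obstacle; the step requiring the most care is the induction, specifically keeping the degenerate branch (a dyadic square whose intersection with $E$ is null, so that no division by zero is ever attempted) consistent with the closed form, and noting that ``open as a subset of $[0,1)^2$'' refers to the subspace topology, which is harmless here since the relevant dyadic squares already lie in $[0,1)^2$. The geometric step is just the standard fact that the nested dyadic squares around a point shrink to that point and are therefore eventually contained in any neighborhood of it.
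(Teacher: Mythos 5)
Your proof is correct. Note that the paper itself does not prove this statement---it is quoted as Theorem~\ref{thm:osm} from Lutz and Lutz~\cite{LutLut15} and used as a black box---so there is no in-paper argument to compare against; your route (establishing the closed form $d_E(Q_r(u,v))=4^r\lambda(E\cap Q_r(u,v))/\lambda(E)$ by induction, with the careful handling of the null branch so no division by zero occurs, and then using relative openness to get the dyadic square containing $(x,y)$ eventually inside $E$) is the standard argument for this fact and matches how it is proved in the cited source.
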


\section{Constructing the Set}

We follow a Kakeya set construction from the textbook of Bishop and Peres~\cite{BisPer17}, who attribute this construction to Bishop while also acknowledging an earlier, related construction by Sawyer~\cite{Sawy87}. The authors prove that the closure of the set has Lebesgue measure 0 and further that $\delta$-neighborhoods of the set have Lebesgue measure $O\left(\frac{1}{-\log\delta}\right)$.

Here we modify that construction---in a manner loosely suggested by an exercise in~\cite{BisPer17}---to construct a set that contains full lines in all directions. In Section~\ref{sec:mainthm} we will show that this set has ee-measure 0.

Define the sequence $\{x_k\}_{k\in\N}$ of dyadic rationals as the concatenation, over all $\ell\in\N$, of the blocks
\[X_\ell=\left((-1)^\ell(i2^{-\ell}-\ell):i=0,\ldots,2^\ell(2\ell+1)-1\right).\]
Thus for even $\ell$ the block $X_\ell$ consists of all $\ell$-dyadic rationals in the range $[-\ell,\ell+1)$ in ascending order, and for odd $\ell$ the block $X_\ell$ consists of all $\ell$-dyadic rationals in the range $(-\ell-1,\ell]$ in descending order.

Using this sequence, define the function $b:[0,1]\to\R$ by
\begin{equation}\label{eq:b}
    b(a)=\sum_{n=1}^\infty(x_{n-1}-x_n)(a-2^{-n}\lfloor a2^{n}\rfloor).
\end{equation}
For each slope $a\in [0,1]$, our set $F$ will contain the line with slope $a$ and $y$-intercept $b(a)$: We define
\[F=\{(x,ax+b(a)):a\in[0,1],x\in\R\}.\]

\section{Finitary Precursors to $F$}\label{sec:precursors}

We now define and analyze finitary precursors to the set $F$, which will be used to build an effective martingale that succeeds everywhere in $\#F$, the set obtained by applying the $\#$ operator defined in~\eqref{eq:hash} to $F$.

For each positive integer $k$, define the function $b_k:[0,1]\to\R$ by
\begin{equation}\label{eq:bk}
    b_k(a)=\sum_{n=1}^k(x_{n-1}-x_n)(a-2^{-n}\lfloor 2^{n}a\rfloor),
\end{equation}
and define the set
\[F_k= \{ (x, ax + b_k(a)) : a\in (2^{-k}\Z) \cap[0,1), x \in \R \}.\]
Thus $F_k$ is the union of $2^k$ lines, each of which has a slope and intercept that can be explicitly calculated in finite time.

For every set $E\subseteq\R^2$ and every $\delta>0$, let $E[\delta]$ denote the open $\delta$-neighborhood of $E$ under the $\ell^\infty$ norm. That is, a point $(x,y)\in\R^2$ belongs to $E[\delta]$ if and only if there is some point $(x',y')\in E$ such that
\[\max\{|x-x'|,|y-y'|\}<\delta.\]
Using the $\ell^\infty$ norm instead of the ordinary Euclidean norm will slightly simplify our analysis, in part because neighborhoods of polygons under the $\ell^\infty$ norm are polygons.

For every $m\in\N$ and every $m$-dyadic rational $\hat{x}\in [-m,m]$, define the set
\[\tilde{F}_{m,\hat{x}}=(F_k\cap([\hat{x},\hat{x}+2^{-m})\times\R))[2^{1-m-k}],\]
where $k$ is the unique index such that $\hat{x}=x_k$ is in the block $X_m$. For our eventual double exponential time bound, it is important that this $k$ is only exponentially larger than $m$.

\begin{observation}\label{obs:kbound}
    For every integer $m\geq 4$, every $x_k$ in the block $X_m$ satisfies $k\leq 2^{2m}$.
\end{observation}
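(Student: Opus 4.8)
The plan is to count. Since $\{x_k\}$ is the concatenation, in order, of the blocks $X_0, X_1, X_2, \dots$, the index $k$ of an element $x_k$ lying in $X_m$ is at most one less than the total number of entries appearing in $X_0, X_1, \dots, X_m$ combined. Directly from its definition, the block $X_\ell$ lists the value $(-1)^\ell(i2^{-\ell}-\ell)$ for $i = 0, 1, \dots, 2^\ell(2\ell+1)-1$, so it has exactly $2^\ell(2\ell+1)$ entries. Hence it suffices to show $\left(\sum_{\ell=0}^{m} 2^\ell(2\ell+1)\right) - 1 \le 2^{2m}$ for every integer $m \ge 4$.

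First I would put this sum in closed form. Writing it as $2\sum_{\ell=0}^m \ell 2^\ell + \sum_{\ell=0}^m 2^\ell$ and using the identities $\sum_{\ell=0}^m 2^\ell = 2^{m+1}-1$ and $\sum_{\ell=0}^m \ell 2^\ell = (m-1)2^{m+1}+2$ yields
\[
\sum_{\ell=0}^{m} 2^\ell(2\ell+1) \;=\; 2^{m+1}(2m-1) + 3 .
\]
Therefore every $x_k$ in $X_m$ satisfies $k \le 2^{m+1}(2m-1) + 2$.

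It then remains to verify the elementary inequality $2^{m+1}(2m-1) + 2 \le 2^{2m}$ for $m \ge 4$. Since $2^{m+1}(2m-1) + 2 = m\,2^{m+2} - 2^{m+1} + 2 \le m\,2^{m+2}$, this reduces to $m \le 2^{m-2}$, which holds with equality at $m = 4$ and for all larger $m$ by an immediate induction, the right-hand side at least doubling each time $m$ increases by one.

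I do not anticipate any real obstacle; the content is just the closed-form evaluation of a geometric-type sum together with a routine inequality. The only points deserving a moment's attention are getting the constant term of the closed form right and noticing that the bound is genuinely false for $m = 3$ (there $2^4\cdot 5 + 2 = 82 > 64 = 2^6$), which is exactly why the hypothesis $m \ge 4$ is imposed. The observation is worth stating because it is this at-most-exponential --- in fact $O(m\,2^m)$ --- growth of $k$ in $m$ that later keeps the neighborhoods $\tilde F_{m,\hat x}$, which are defined in terms of this $k$, within reach of a double exponential time computation.
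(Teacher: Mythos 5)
Your proof is correct and takes essentially the same route as the paper's: bound $k$ by the cumulative number of entries in the blocks up through $X_m$, evaluate that sum in closed form, and verify the resulting quantity is at most $2^{2m}$ once $m\geq 4$. If anything, your bookkeeping (each $X_\ell$ has exactly $2^\ell(2\ell+1)$ entries, with $X_0$ included) is a bit more careful than the paper's own computation, which contains harmless off-by-one slips that do not affect the conclusion.
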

\begin{proof}
    Every $x_k$ in $X_m$ satisfies
    \begin{align*}
        k&\leq\sum_{\ell=1}^{m}|X_\ell|\\
        &=\sum_{\ell=1}^{m} \big(2^\ell(2\ell+1)-1\big)\\
        &=(2^{m+2}-1)m-2(2^m-1)\\
        &\leq 2^{2m},
    \end{align*}
    where the last inequality holds whenever $m\geq 4$.
\end{proof}

The set $\tilde{F}_{m,\hat{x}}$ can be obtained by taking the union of $2^{k}$ thin tubes, i.e., neighborhoods of lines, and then taking a thin vertical section of that union. As we now show, this vertical section must have small measure. Furthermore, within the scope of an even thinner vertical section, these $2^k$ tubes cover the union of the infinitely many lines in $F$.

\begin{lemma}\label{lem:maintech}
    For every $m\in\N$ and every $m$-dyadic rational $\hat{x}\in[-m,m]$, we have
    \begin{equation}\label{eq:fmxarea}
        \lambda(\tilde{F}_{m,\hat{x}})\leq 2^{2-2m}
    \end{equation}
    and
    \begin{equation}\label{eq:fmxcontainment}
        F\cap([\hat{x},\hat{x}+2^{-m})\times\R)\subseteq \tilde{F}_{m,\hat{x}}\cap([\hat{x},\hat{x}+2^{-m})\times\R).
    \end{equation}    
\end{lemma}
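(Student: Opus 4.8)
The plan is to track each line of $F$ by a single, explicitly chosen line of $F_k$, and---for the measure estimate---to exploit that near $x=\hat x$ the $2^k$ lines of $F_k$ collapse into roughly $2^{k-1}$ extremely tight pairs. Write $\Delta_n=x_{n-1}-x_n$ and $\langle a\rangle_n=a-2^{-n}\lfloor 2^na\rfloor\in[0,2^{-n})$; recall $x_0=0$, the telescoping identity $\sum_{n=j}^{k}\Delta_n=x_{j-1}-x_k$, and that $|\Delta_n|\le 2^{-m}$ for every $n>k$, since all indices beyond $k$ lie in blocks $X_\ell$ with $\ell\ge m$.

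\emph{Containment \eqref{eq:fmxcontainment}.} Fix $a\in[0,1)$ and let $\bar a=2^{-k}\lfloor 2^ka\rfloor\in(2^{-k}\Z)\cap[0,1)$ be its $k$-bit truncation, so that the line $L_{\bar a}\colon y=\bar a x+b_k(\bar a)$ belongs to $F_k$. Because $a$ and $\bar a$ agree on their first $k$ bits, $\langle a\rangle_n-\langle\bar a\rangle_n=a-\bar a=\langle a\rangle_k$ for $n\le k$, while $\langle\bar a\rangle_n=0$ for $n>k$ (so also $b_k(\bar a)=b(\bar a)$); telescoping and $x_0=0$ then give
\[
b(a)-b_k(\bar a)=b(a)-b(\bar a)=-\langle a\rangle_k\,x_k+\sum_{n>k}\Delta_n\langle a\rangle_n ,
\]
hence for every $x$,
\[
\big(ax+b(a)\big)-\big(\bar a x+b_k(\bar a)\big)=\langle a\rangle_k(x-x_k)+\sum_{n>k}\Delta_n\langle a\rangle_n .
\]
When $x\in[\hat x,\hat x+2^{-m})=[x_k,x_k+2^{-m})$ the first term has absolute value $<2^{-m-k}$ and the second has absolute value $<2^{-m}\sum_{n>k}2^{-n}=2^{-m-k}$, so the difference is $<2^{1-m-k}$. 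Since $(x,\bar a x+b_k(\bar a))\in F_k\cap([\hat x,\hat x+2^{-m})\times\R)$ and $(x,ax+b(a))$ lies in the same strip at $\ell^\infty$-distance less than $2^{1-m-k}$ from it, we get $(x,ax+b(a))\in\tilde F_{m,\hat x}$, which is \eqref{eq:fmxcontainment}. (The value $a=1$, if retained in $F$, is handled by the same estimate with $\bar a=1-2^{-k}$, or set aside as a null direction.)

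\emph{Measure \eqref{eq:fmxarea}.} Write $\tilde F_{m,\hat x}=\bigcup_{\bar a}S_{\bar a}[2^{1-m-k}]$, where $S_{\bar a}$ is the piece of $L_{\bar a}$ over $[\hat x,\hat x+2^{-m})$ and $\bar a$ ranges over $(2^{-k}\Z)\cap[0,1)$. The crucial point is that flipping the last bit of $\bar a$ scarcely moves the line near $\hat x$: if $\bar a$ has $k$-th bit $0$ then, with no carry, $b_k(\bar a+2^{-k})-b_k(\bar a)=2^{-k}(x_0-x_{k-1})=-2^{-k}x_{k-1}$, so the vertical separation of $S_{\bar a+2^{-k}}$ and $S_{\bar a}$ at abscissa $x$ equals $2^{-k}(x-x_{k-1})$, which on (a slight enlargement of) the strip has absolute value $O(2^{-m-k})$---comparable to the neighborhood radius $2^{1-m-k}$, because $x_{k-1}$ lies within $2^{1-m}$ of the strip. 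Pairing each $\bar a=i2^{-k}$ ($i$ even) with $\bar a+2^{-k}$, the $\ell^\infty$-$2^{1-m-k}$-neighborhood of each of the $2^{k-1}$ pairs is therefore contained, at every $x$, in a single $y$-interval of length $O\!\big(2^{-m-k}(1+\bar a)\big)$; integrating over the $x$-extent $2^{-m}+O(2^{-m-k})$, summing over the pairs, and using $\sum_{\bar a}(1+\bar a)=\tfrac32 2^k+O(1)$ returns $\lambda(\tilde F_{m,\hat x})\le 2^{2-2m}$ up to terms of strictly smaller order. The collapse of the $2^k$ tubes into $2^{k-1}$ nearly coincident pairs is exactly what is needed to reach the stated constant: treating the $2^k$ segments as disjoint gives only $O(2^{-2m})$ with a larger constant.

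The hardest part is the measure bound: one must make the pairing argument quantitative enough to land the sharp constant $2^{2-2m}$, and in particular control those $\hat x$ sitting at the junction of two blocks, where $|x_{k-1}-x_k|$ is as large as $2^{1-m}$. The underlying structural feature---that consecutive grid lines of $F_k$ cross at $x=x_{k-1}$, a point within $2^{1-m}$ of the strip because $\hat x\in X_m$, and that (via $x_0=0$) the same pivoting aligns $L_a$ with $L_{\bar a}$ precisely at $x=\hat x$---is what makes both halves of the lemma go through.
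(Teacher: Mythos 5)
Your containment argument for \eqref{eq:fmxcontainment} is correct and is essentially the paper's: you truncate $a$ to $\bar a=2^{-k}\lfloor 2^k a\rfloor$ and bound the vertical deviation over the strip by $2^{-m-k}+2^{-m-k}<2^{1-m-k}$; your telescoping identity is just a tidier packaging of the paper's two-step estimate via $|b(a)-b_k(a)|\le 2^{-m-k}$ and the fact that $b_k$ has slope $-x_k=-\hat x$ off the $k$-dyadic points. One caveat: your parenthetical for $a=1$ does not work as stated. Since $\langle 1\rangle_n=0$ for all $n$, we have $b(1)=0$ while $b(1^-)=\sum_n (x_{n-1}-x_n)2^{-n}$, which is a nonzero constant; so the slope-$1$ line of $F$ sits a constant vertical distance from the slope-$(1-2^{-k})$ line of $F_k$, not within $2^{1-m-k}$ of it, and ``setting it aside as a null direction'' cannot help because \eqref{eq:fmxcontainment} is a set containment, not a measure statement. (The paper's own proof quietly has the same edge case at $a=1$, so this is a shared blemish, but your proposed repair is not a repair.)

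The real gap is in the measure half. The paper proves \eqref{eq:fmxarea} by a plain union bound with no overlap analysis: each of the $2^k$ sets $L_{a,b_k(a)}[\delta]\cap([\hat x,\hat x+2^{-m})\times\R)[\delta]$ has horizontal extent $2^{-m}+2\delta$ and vertical slices of length $O(\delta)$, so its area is $O\bigl((2^{-m}+2\delta)\delta\bigr)$, and multiplying by $2^k$ with $\delta=2^{1-m-k}$ gives the bound. Your proposal rests on the belief that the constant $2^{2-2m}$ can only be reached by exploiting the near-coincidence of paired tubes, and then leaves the decisive computation unexecuted: the key step is asserted as ``a single $y$-interval of length $O(2^{-m-k}(1+\bar a))$'' and a conclusion holding ``up to terms of strictly smaller order,'' but unspecified constants are precisely what a sharp-constant claim cannot absorb. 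Indeed, inserting your own figures (pair separation up to about $3\cdot 2^{-m-k}$, since $|x_{k-1}-\hat x|$ can be $2^{1-m}$ at a block boundary; per-tube slice length $2\delta(1+\bar a)$ under the $\ell^\infty$ norm; horizontal extent $2^{-m}+O(\delta)$; $\sum_{\bar a}(1+\bar a)\approx\tfrac32\cdot 2^{k-1}$ over the pairs) yields roughly $(3+\tfrac32)\,2^{k-1}\delta\,2^{-m}>2^{2-2m}$, so the pairing as sketched does not land the stated constant either---and you yourself flag the block-boundary case as unresolved. The constructive lesson is that the exact constant is immaterial downstream: Lemma~\ref{lem:dmijr} and the main theorem only need $\lambda(\tilde F_{m,\hat x})=2^{O(1)-2m}$, so the correct move is the paper's one-line union bound (accepting, if you worry about the $(1+a)$ factor in $\ell^\infty$ vertical slices, a slightly larger constant such as $2^{3-2m}$), not an overlap argument whose quantitative core is missing.
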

\begin{proof}
    Fix any $m\in\N$ and $\hat{x}\in(2^{-m}\Z)\cap[-m,m]$. Letting
    \[\delta=2^{1-m-k},\]
    observe that $\tilde{F}_{m,\hat{x}}$ can equivalently be written as
    \[\bigcup_{a\in(2^{-k}\Z)\cap[0,1)}L_{a,b_k(a)}[\delta]\cap ([\hat{x},\hat{x}+2^{-m})\times\R)[\delta],\]
    where $L_{a,b_k(a)}$ denotes the line of slope $a$ and $y$-intercept $b_k(a)$. Each of these lines intersects
    \[([\hat{x},\hat{x}+2^{-m})\times\R)[\delta]\]
    in a segment of horizontal length $2^{-m}+2\delta$, and each vertical slice of $L_{a,b_k(a)}[\delta]$ has length $2\delta$. Therefore the area of the intersection for each $L_{a,b_k(a)}$ is at most $(2^{-m}+2\delta)\cdot 2\delta$. As there are $2^{k}$ lines, the area of the union of these intersections is at most
    \[2^k\cdot(2^{-m}+2\delta)\cdot 2\delta\leq 2^{2-2m},\]
    proving that~\eqref{eq:fmxarea} holds.
    
    For~\eqref{eq:fmxcontainment}, observe first that the function $b_k$ defined in~\eqref{eq:bk} closely approximates the function $b$ defined in~\eqref{eq:b}, in the sense that for all $a\in[0,1]$ we have
    \begin{align*}
        |b(a)-b_k(a)|&\leq\sum_{n=k+1}^\infty |x_{n-1}-x_n|(a-2^{-n}\lfloor 2^na\rfloor)\\
        &\leq 2^{-m}\sum_{n=k+1}^\infty 2^{-n}\\
        &=2^{-m-k}.
    \end{align*}

    The function $b_k(a)$ is piecewise linear and right-continuous, with slope $-x_k=-\hat{x}$ at all $a\in[0,1]\setminus 2^{-k}\Z$. Therefore for all $x\in[\hat{x},\hat{x}+2^{-m})$, $ax+b_k(a)$ is also piecewise linear and right-continuous as a function of $a$, with slope of absolute value
    \begin{align*}
        \left|\frac{d}{da}(ax+b_k(a))\right|&=|x-\hat{x}|\\
        &<2^{-m}
    \end{align*}
    at all $a\in[0,1]\setminus 2^{-k}\Z$.

    Let $a\in[0,1]$ and $x\in[\hat{x},\hat{x}+2^{-m})$ be such that
    \[(x,ax+b(a))\in F,\]
    and let $\hat{a}=2^{-k}\lfloor a2^k\rfloor$. Then
    \[(x,\hat{a}x+b_k(\hat{a}))\in F_k,\]
    and
    \begin{align*}
        |ax+b(a)-(\hat{a}x+b_k(\hat{a}))|
        &\leq |b(a)-b_k(a)|+|ax+b_k(a)-(\hat{a}x+b_k(\hat{a}))|\\
        &\leq 2^{-m-k}+2^{-m}|a-\hat{a}|\\
        &< 2^{1-m-k}.
    \end{align*}
    The vertical slice at $x$ of $\tilde{F}_{m,\hat{x}}$ includes a $2^{1-m-k}$ open neighborhood of $\hat{a}x+b_k(\hat{a})$. This shows that
    \[(x,ax+b(a))\in \tilde{F}_{m,\hat{x}}\cap([\hat{x},\hat{x}+2^{-m})\times \R).\]
    As $x\in[\hat{x},\hat{x}+2^{-m})$ was arbitrary, this completes the proof of~\eqref{eq:fmxcontainment} and of the lemma.
\end{proof}

\section{Martingales for Finitary Precursor Sets}

\subsection{Open Set Martingales for the $\Theta(2^{-m})\times 1$ Scale}

Lemma~\ref{lem:maintech} tells us that $F$ is covered by a collection of thin vertical sets, each of which has small measure. We now further divide these vertical sets into pieces of height 1, translate those pieces to the unit square $[0,1)^2$, and define an effective martingale corresponding to each such piece.

For each $m\in\N$, $\hat{x}\in (2^{-m}\Z)\cap[-m,m]$, and $j\in\Z$, let
\[\tilde{F}_{m,\hat{x},j}=[0,1)^2\cap(\tilde{F}_{m,\hat{x}}-(\lfloor\hat{x}\rfloor,j)),\]
and let $\tilde{d}_{m,\hat{x},j}$ be the open set martingale (as defined in Section~\ref{ssec:osm}) for $\tilde{F}_{m,\hat{x},j}$.

\begin{lemma}\label{lem:fmxj}
    For every $m\in\N$, $\hat{x}\in 2^{-m}\Z\cap[-m,m]$, $j\in\Z$, and $Q_r(u,v)\in\mathcal{Q}$, the martingale value
    \[\tilde{d}_{m,\hat{x},j}(Q_r(u,v))\]
    is computable in $2^{2^{O(m+r)}}$ time.
\end{lemma}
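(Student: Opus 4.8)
The plan is to trace the definition of the open-set martingale $\tilde d_{m,\hat x,j}$ back to what it requires us to compute, and show each ingredient is available within a $2^{2^{O(m+r)}}$ time budget. By the recursive definition in Section~\ref{ssec:osm}, $\tilde d_{m,\hat x,j}(Q_r(u,v))$ is a product of $r$ ratios of the form $\lambda(\tilde F_{m,\hat x,j}\cap Q)/\lambda(\tilde F_{m,\hat x,j}\cap Q')$ for a nested chain of dyadic squares $Q_0(0,0)\supseteq\cdots\supseteq Q_r(u,v)$, multiplied by the initial capital $0$ or $1$ and by a power of $4$. (In fact it telescopes to $4^r\lambda(\tilde F_{m,\hat x,j}\cap Q_r(u,v))/\lambda(\tilde F_{m,\hat x,j})$ when the denominator is nonzero.) So the whole computation reduces to (i) computing the initial capital, i.e.\ deciding whether $\lambda(\tilde F_{m,\hat x,j})=0$, and (ii) computing the Lebesgue measure of the intersection of $\tilde F_{m,\hat x,j}$ with each dyadic square in the chain, exactly or to sufficient precision.

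The key observation is that $\tilde F_{m,\hat x,j}$ is a finite union of polygons with rational vertices of controlled complexity. Indeed, by Observation~\ref{obs:kbound} the relevant index satisfies $k\le 2^{2m}$, so $\tilde F_{m,\hat x}$ is a union of $2^k\le 2^{2^{2m}}$ tubes, each being the $\ell^\infty$-neighborhood of a line segment of slope $a\in(2^{-k}\Z)\cap[0,1)$ and intercept $b_k(a)$; under the $\ell^\infty$ norm each such neighborhood, intersected with the vertical strip $[\hat x,\hat x+2^{-m})\times\R$ and translated, is a polygon whose vertices are dyadic rationals with denominator at most $2^{O(k+m)}=2^{2^{O(m)}}$ and numerator of comparable bit-length. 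Restricting to $[0,1)^2$ and intersecting with a dyadic square $Q_r(u,v)$ only adds cutting half-planes with $r$-dyadic (hence $O(r)$-bit) coefficients. Thus $\tilde F_{m,\hat x,j}\cap Q_r(u,v)$ is a union of $2^{2^{O(m)}}$ polygons, each with $O(1)$ vertices whose coordinates have bit-length $2^{O(m)}+O(r)=2^{O(m+r)}$. Its area is computed by the shoelace formula on each tube's polygon and then reconciling overlaps; rather than inclusion–exclusion over exponentially many tubes, I would triangulate or sweep: decompose the strip by the $O(2^k)$ tube boundaries into trapezoidal cells, in each cell the covered region is an interval union, and sum the areas—this runs in time polynomial in the number of tubes and their bit-complexity, i.e.\ $2^{2^{O(m)}}\cdot\poly(2^{O(m+r)})=2^{2^{O(m+r)}}$. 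Since all data are rationals, this area is computed exactly, so the $2^{-t}$ approximation requirement of~\eqref{eq:dhat} is met trivially (with $t$ contributing nothing; more carefully, $\hat d(\cdot,t)$ just outputs the exact value truncated to $t$ bits, costing an extra $\poly(t)$).

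Finally I would assemble these pieces: to evaluate $\tilde d_{m,\hat x,j}(Q_r(u,v))$, compute $\lambda(\tilde F_{m,\hat x,j})$ (deciding the initial-capital case split along the way, since this measure is an exactly-computed rational), compute $\lambda(\tilde F_{m,\hat x,j}\cap Q_r(u,v))$, and return $0$ if the former is $0$, else $4^r$ times the ratio. Each area computation is $2^{2^{O(m+r)}}$ time, the arithmetic on rationals of bit-length $2^{O(m+r)}$ is negligible by comparison, and there are only $O(1)$ such computations, so the total is $2^{2^{O(m+r)}}$ as claimed. The main obstacle, and the only step needing genuine care, is the area-of-a-union computation: a naive inclusion–exclusion over $2^k$ tubes is doubly exponential in $k$ and hence triply exponential in $m$, which would be fatal, so it is essential to use a sweep/arrangement method whose running time is only polynomial in the number of tubes. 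I would also double-check that the polygon vertices really do have the claimed bit-complexity—this rests on $b_k(a)$ being a sum of $k$ terms each of bounded denominator, giving $b_k(a)$ denominator $2^{O(k)}$—and that intersecting with $Q_r(u,v)$ does not blow up the vertex count beyond a constant per tube, which holds because a convex polygon cut by $O(1)$ half-planes stays a convex polygon with $O(1)$ extra vertices.
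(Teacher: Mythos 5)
Your proposal is correct and follows essentially the same route as the paper: both reduce the martingale value to exact area computations for the union of the $2^k$ parallelogram tubes (with $k\le 2^{2m}$ by Observation~\ref{obs:kbound}), carried out by computational-geometry methods running in time polynomial in the number of tubes, which yields the $2^{2^{O(m+r)}}$ bound. The only differences are cosmetic: you telescope the open-set martingale to $4^r\lambda(\tilde F_{m,\hat x,j}\cap Q_r(u,v))/\lambda(\tilde F_{m,\hat x,j})$ and compute the union's area by a sweep/trapezoidal decomposition, whereas the paper inducts on $r$ and builds the union polygon iteratively, using an explicit bound on the number of vertices.
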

\begin{proof}
    Fix any integer $m\geq 4$, $\hat{x}\in 2^{-m}\Z\cap[-m,m]$, and $j\in\Z$. We will prove the lemma by induction on $r$. It is trivially true for $r=0$, so fix $r\geq 0$ and assume that $\tilde{d}_{m,\hat{x},j}(Q)$ is computable in $2^{2^{O(m+r)}}$ time for each $r$-dyadic square $Q\in\mathcal{Q}$.
    
    Let $Q_{r+1}(2u+\alpha,2v+\beta)$, where $u,v\in\{0,\ldots,2^r-1\}$ and $\alpha,\beta\in\{0,1\}$, be any $(r+1)$-dyadic square. Then $\tilde{d}_{m,\hat{x},j}(Q_{r+1}(2u+\alpha,2v+\beta))$ is given by either 0 or
    \[4\tilde{d}_{m,\hat{x},j}(Q_r(u,v))\frac{\lambda(\tilde{F}_{m,\hat{x},j}\cap Q_{r+1}(2u+\alpha,2v+\beta))}{\lambda(\tilde{F}_{m,\hat{x},j}\cap Q_r(u,v))},\]
    so it suffices to show that $\lambda(\tilde{F}_{m,\hat{x},j}\cap Q_{r+1}(2u+\alpha,2v+\beta))$ and $\lambda(\tilde{F}_{m,\hat{x},j}\cap Q_r(u,v))$ can be found in $2^{2^{O(m+r)}}$ time.
    
    Let $k$ be the unique index such that $\hat{x}=x_k$ is in the block $X_m$, and let $\delta=2^{1-m-k}$, noting that $k\leq 2^{2m}$ for all $m\geq 4$ by Observation~\ref{obs:kbound}. Then $\tilde{F}_{m,\hat{x},j}$ is the intersection with $[0,1)^2$ of a translated copy of $\tilde{F}_{m,\hat{x}}$, i.e., of
    \begin{equation}\label{eq:fmxalt}
        \bigcup_{a\in(2^{-k}\Z)\cap[0,1)}L_{a,b_k(a)}[\delta]\cap ([\hat{x},\hat{x}+2^{-m})\times\R)[\delta],
    \end{equation}
    where $L_{a,b(a)}$ denotes the line with slope $a$ and $y$-intercept $b(a)$. We can find the set~\eqref{eq:fmxalt} as follows. Initially set $S=\emptyset$. For $a\in(2^{-k}\Z)\cap[0,1)$, we do the following.
    
    First find $b_k(a)$ by directly applying~\eqref{eq:bk}, which involves finding the sum of $2^k$ terms, each of which is an $m+k\leq (2k)$-dyadic number. This runs in $\poly(2^k)$ time.
    
    Then set
    \[S\leftarrow S\cup \big(L_{a,b_k(a)}[\delta]\cap ([\hat{x},\hat{x}+2^{-m})\times\R)[\delta]\big).\]
    Each $L_{a,b_k(a)}[\delta]\cap ([\hat{x},\hat{x}+2^{-m})\times\R)[\delta]$ is a parallelogram, and at all stages, each vertex of $S$ is the intersection of segments on the boundaries of these $2^k$ parallelograms. Therefore $S$ never has more than $\binom{4\cdot 2^k}{2}<2^{4+2k}$ vertices. Standard algorithms from computational geometry can find the union of two polygons, each with at most $N$ vertices, in $\poly(N)$ time~\cite{dBCKO08}, so each union operation takes $\poly(2^{4k+2})=\poly(2^k)$ time. In total, finding the polygon $\tilde{F}_{m,\hat{x}}$ takes $2^k(\poly(2^k)+\poly(2^k))=\poly(2^k)$ time.

    Translating $\tilde{F}_{m,\hat{x}}$ and finding the area of the translated polygon's intersection with $Q_r(u,v)$ takes $\poly(2^k+r)$ time, again using standard algorithms from computational geometry~\cite{dBCKO08}. Therefore the denominator $\lambda(\tilde{F}_{m,\hat{x},j}\cap Q_r(u,v))$ can be found in $\poly(2^k+r)\leq 2^{2^{O(m+r)}}$ time. For the same reason, the numerator $\lambda(\tilde{F}_{m,\hat{x},j}\cap Q_{r+1}(2u+\alpha,2v+\beta))$ can be found in $2^{2^{O(m+r)}}$ time, completing the proof of the lemma.
\end{proof}

\subsection{Combining Martingales for the $1\times 1$ Scale}\label{ssec:combining}
We now average the $2^m$ martingales $\tilde{d}_{m,\hat{x},j}$ corresponding to each integer unit square piece of $F$ to get an effective martingale corresponding to that whole $1\times 1$ square.

For every $m\in\N$ and $i,j\in\Z$, let $d_{m,i,j}:\mathcal{Q}\to[0,\infty)$ be the martingale defined by
\begin{equation}\label{eq:dmij}
    d_{m,i,j}(Q)=2^{-m}\sum_{\hat{x}\in (2^{-m}\Z)\cap[i,i+1)}\tilde{d}_{m,\hat{x},j}(Q).
\end{equation}
By Lemma~\ref{lem:fmxj}, $d_{m,i,j}(Q)$ is computable in
\[2^m\cdot 2^{2^{O(m+r)}}\leq 2^{2^{O(m+r)}}\]
time for each $r$-dyadic square $Q\in\mathcal{Q}$.

\begin{lemma}\label{lem:dmijr}
    Let $i,j\in\Z$ and $m\geq\max\{|i|+1,4\}$. For every $(x,y)\in F\cap([i,i+1)\times[j,j+1))$,
    for all sufficiently large $r\in\N$,
    \[d_{m,i,j}^{(r)}(x-i,y-j)\geq 2^{m-2}.\]
\end{lemma}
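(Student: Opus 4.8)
The plan is to trace a point $(x,y)\in F\cap([i,i+1)\times[j,j+1))$ through the three layers of construction---the vertical section $\tilde F_{m,\hat x}$, its translate $\tilde F_{m,\hat x,j}$, and the average $d_{m,i,j}$---and apply Theorem~\ref{thm:osm} to the open-set martingale at the middle layer. First I would identify the relevant $\hat x$: since $x\in[i,i+1)\subseteq[-m+1,m]$ when $m\geq|i|+1$, there is a unique $m$-dyadic rational $\hat x=x_k$ in the block $X_m$ with $x\in[\hat x,\hat x+2^{-m})$. By the containment clause~\eqref{eq:fmxcontainment} of Lemma~\ref{lem:maintech}, $(x,y)\in\tilde F_{m,\hat x}$. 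Translating by $(\lfloor\hat x\rfloor,j)$ and intersecting with $[0,1)^2$, I would check that $(x-\lfloor\hat x\rfloor,\,y-j)$---equivalently $(x-i,y-j)$ once one notes $\lfloor\hat x\rfloor=i$ here, since $\hat x\in[i,i+1)$---lands in $\tilde F_{m,\hat x,j}$. This requires that $y-j\in[0,1)$, which holds by the hypothesis $(x,y)\in[i,i+1)\times[j,j+1)$, and that $x-i\in[0,1)$, which is immediate.

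Next, $\tilde F_{m,\hat x,j}$ is open as a subset of $[0,1)^2$ (it is a neighborhood-of-lines polygon intersected with the half-open square) and nonempty (it contains the image of $(x,y)$), so Theorem~\ref{thm:osm} applies: $\tilde d_{m,\hat x,j}^{(r)}(x-i,y-j)=1/\lambda(\tilde F_{m,\hat x,j})$ for all sufficiently large $r$. I would then bound $\lambda(\tilde F_{m,\hat x,j})$ from above: since $\tilde F_{m,\hat x,j}\subseteq\#(\tilde F_{m,\hat x})$ is obtained by translating and restricting $\tilde F_{m,\hat x}$, and by~\eqref{eq:fmxarea} of Lemma~\ref{lem:maintech} we have $\lambda(\tilde F_{m,\hat x})\leq 2^{2-2m}$, it follows that $\lambda(\tilde F_{m,\hat x,j})\leq 2^{2-2m}$ as well. (Strictly, the $\#$ operation could in principle stack overlapping vertical translates, but $\tilde F_{m,\hat x}$ has vertical extent governed by $b_k$ plus the $\delta$-neighborhood and the single column width $2^{-m}$, so for the purposes of the bound one argues that the measure of any single unit-square translate-slice is at most the total measure $\lambda(\tilde F_{m,\hat x})\leq 2^{2-2m}$; I would spell this out carefully.) Hence $\tilde d_{m,\hat x,j}^{(r)}(x-i,y-j)\geq 2^{2m-2}$ eventually.

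Finally I would plug this into the average~\eqref{eq:dmij}. Since every term $\tilde d_{m,\hat x',j}(Q)$ is nonnegative, dropping all terms except $\hat x'=\hat x$ gives
\[
d_{m,i,j}^{(r)}(x-i,y-j)\;\geq\;2^{-m}\,\tilde d_{m,\hat x,j}^{(r)}(x-i,y-j)\;\geq\;2^{-m}\cdot 2^{2m-2}\;=\;2^{m-2}
\]
for all sufficiently large $r$, which is exactly the claim.

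The main obstacle I anticipate is the bookkeeping at the translation step: confirming that the correct integer translate is $(\lfloor\hat x\rfloor,j)$ and that $\lfloor\hat x\rfloor=i$ under the hypothesis $m\geq|i|+1$ (so that the point's image really is $(x-i,y-j)$ and not a shift of it), together with the clean measure bound $\lambda(\tilde F_{m,\hat x,j})\leq 2^{2-2m}$ despite the $\#$-style wraparound. Everything downstream---openness, nonemptiness, Theorem~\ref{thm:osm}, and the one-term lower bound on the average---is routine once those alignment facts are pinned down.
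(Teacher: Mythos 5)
Your proposal is correct and takes essentially the same route as the paper: place the point in $\tilde F_{m,\hat x,j}$ (via the containment clause of Lemma~\ref{lem:maintech} and the observation that $\lfloor\hat x\rfloor=i$ when $m\geq|i|+1$), apply Theorem~\ref{thm:osm} together with the area bound $\lambda(\tilde F_{m,\hat x})\leq 2^{2-2m}$ to get $\tilde d_{m,\hat x,j}^{(r)}(x-i,y-j)\geq 2^{2m-2}$ eventually, and then drop all but the $\hat x$ term of the average~\eqref{eq:dmij}. Your side worry about wraparound is moot, since $\tilde F_{m,\hat x,j}$ is a single translate intersected with $[0,1)^2$, so $\lambda(\tilde F_{m,\hat x,j})\leq\lambda(\tilde F_{m,\hat x})$ is immediate.
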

\begin{proof}
    For each $\hat{x}\in (2^{-m}\Z)\cap[i,i+1)$,
    \[\tilde{F}_{m,\hat{x},j}\subseteq\tilde{F}_{m,\hat{x}}-(i,j).\]
    As
    \begin{align*}
        \lambda(\tilde{F}_{m,\hat{x}}-(i,j))&=\lambda(\tilde{F}_{m,\hat{x}})\\
        &\leq 2^{2-2m}
    \end{align*}
    by Lemma~\ref{lem:maintech}, Theorem~\ref{thm:osm} tells us that for all sufficiently large $r$ and all $(x,y)\in F\cap([\hat{x},\hat{x}+2^{-m})\times [j,j+1))$,
    \[\tilde{d}_{m,\hat{x},j}^{(r)}(x-i,y-j)\geq 2^{2m-2}.\]
    Therefore for all sufficiently large $r$ and all $(x,y)\in F\cap([i,i+1)\times[j,j+1)$, letting $\hat{x}=2^{-m}\lfloor 2^mx\rfloor$, we have
    \begin{align*}
        d_{m,i,j}^{(r)}(x-i,y-j)&\geq 2^{-m}\tilde{d}_{m,\hat{x},j}^{(r)}(x-i,y-j)\\
        &\geq 2^{-m}\cdot 2^{2m-2}\\
        &=2^{m-2},
    \end{align*}
    which is the desired bound.
\end{proof}

\section{A Martingale for $\#F$}

In Section~\ref{ssec:combining}, we constructed, for each $m\in\N$ and $i,j\in\Z$, a martingale that attains value $\Omega(2^m)$ on the piece of $F$ that lies in the unit square with lower left-hand corner $(i,j)$ (after it has been translated). To attain unbounded value on all of $F$ (again, after being translated), we now sum these martingales over all $m$, $i$, and $j$, with exponential discounting to ensure that the resulting sum is a martingale with finite initial capital. 

Let $\pi:\Z^2\to\N$ be any standard pairing function, and define the function $\hat{d}:\mathcal{Q}\times\N\to\Q$ by
\[\hat{d}(Q_r(u,v),t)=\sum_{m=0}^{2r+t+1}\sum_{\substack{i,j\in\Z\\\pi(i,j)< m}} 2^{-m-\pi(i,j)}d_{m,i,j}(Q_r(u,v))\]
and the function $d:\mathcal{Q}\to [0,\infty)$ by
\[d(Q_r(u,v))=\sum_{m\in\N}\sum_{\substack{i,j\in\Z\\\pi(i,j)< m}} 2^{-m-\pi(i,j)}d_{m,i,j}(Q_r(u,v)).\]

\begin{lemma}\label{lem:eemartingale}
    The function $d$ is a double exponential time computable martingale.
\end{lemma}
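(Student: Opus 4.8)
The plan is to verify the three defining properties of a double exponential time computable martingale in turn: that $d$ is finite-valued (so the infinite sum converges), that it satisfies the averaging identity \eqref{eq:martingale}, and that it is approximated by $\hat d$ within $2^{-t}$ in double exponential time.

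First I would establish convergence and finiteness. Each summand $d_{m,i,j}$ is a martingale, so by the growth bound \eqref{eq:growthbound} we have $d_{m,i,j}(Q_r(u,v)) \le 4^r d_{m,i,j}(Q_0(0,0))$, and since $d_{m,i,j}$ is an average of open-set martingales whose initial capital is at most $1$, we get $d_{m,i,j}(Q_0(0,0)) \le 1$. Thus the double sum defining $d(Q_r(u,v))$ is dominated by $4^r \sum_{m\in\N} \sum_{\pi(i,j)<m} 2^{-m-\pi(i,j)}$, which is bounded by $4^r \sum_{m\in\N} m\cdot 2^{-m} < \infty$ (using that there are at most $m$ pairs with $\pi(i,j)<m$ and $2^{-\pi(i,j)}\le 1$ for each). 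So $d$ is well-defined and finite; in particular the initial capital $d(Q_0(0,0))$ is a finite constant. The martingale identity \eqref{eq:martingale} is then immediate: it holds for each $d_{m,i,j}$, hence for the (absolutely convergent, nonnegative-termed) linear combination, interchanging the finite averaging sum of \eqref{eq:martingale} with the convergent double sum.

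Next I would handle the approximation bound. The key estimate is the tail bound: for fixed $Q_r(u,v)$, the difference $d(Q_r(u,v)) - \hat d(Q_r(u,v),t)$ equals the sum over $m > 2r+t+1$ of $\sum_{\pi(i,j)<m} 2^{-m-\pi(i,j)} d_{m,i,j}(Q_r(u,v))$, which by the same domination is at most $4^r \sum_{m > 2r+t+1} m\cdot 2^{-m}$. A routine geometric-series estimate shows $\sum_{m>M} m\,2^{-m} = O(M\,2^{-M})$, and with $M = 2r+t+1$ this is at most $4^r \cdot (2r+t+1)\cdot 2^{-(2r+t)} \cdot O(1) = O((2r+t+1) 2^{-t}/4^{\,r/... })$ — more carefully, $4^r 2^{-(2r+t+1)} = 2^{-t-1}$, so the tail is $O((2r+t+1)\cdot 2^{-t-1})$, which I would make strictly less than $2^{-t}$ by either adjusting the cutoff by a small additive constant (replacing $2r+t+1$ with $2r+t+c$ for a fixed $c$) or by noting the bound on $d_{m,i,j}(Q_0(0,0))$ can be sharpened. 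I expect this constant-chasing to be the only delicate bookkeeping; I would present it as choosing the summation cutoff large enough that the geometric decay dominates the polynomial factor $(2r+t+1)$ and the $4^r$ growth.

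Finally I would address the running time of computing $\hat d(Q_r(u,v),t)$ exactly (note $\hat d$ as defined is an exact rational, so it already satisfies \eqref{eq:dhat} with room to spare once the tail bound is in hand). The sum has at most $(2r+t+2)$ values of $m$, and for each, at most $m \le 2r+t+1$ pairs $(i,j)$, so at most $O((r+t)^2)$ terms total. Each term requires computing $d_{m,i,j}(Q_r(u,v))$, which by Lemma~\ref{lem:fmxj} and \eqref{eq:dmij} takes $2^{2^{O(m+r)}}$ time, and since $m \le 2r+t+1$ this is $2^{2^{O(r+t)}}$. Multiplying by the polynomially-many terms and accounting for the arithmetic on rationals of bit-length $2^{2^{O(r+t)}}$ leaves the total at $2^{2^{O(r+t)}}$, as required. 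The only subtlety worth flagging is confirming that $\pi$ being a standard pairing function means $\pi(i,j) < m$ implies $|i|,|j|$ are bounded (polynomially in $m$, in fact linearly for the Cantor pairing), so enumerating those pairs is cheap and the computation of each $d_{m,i,j}$ genuinely falls under Lemma~\ref{lem:fmxj}; I would note this explicitly but not dwell on it.
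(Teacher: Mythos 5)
Your proof follows the same route as the paper's: the martingale property via linearity plus finiteness of the initial capital, the approximation property via a tail estimate using the growth bound \eqref{eq:growthbound}, and the time bound via Lemma~\ref{lem:fmxj} together with $m\le 2r+t+1$. The one genuine problem is in the tail estimate. You bound the pairing-function sum by $\sum_{\pi(i,j)<m}2^{-\pi(i,j)}\le m$ (at most $m$ terms, each at most $1$), which leaves a factor of $m$ in the tail and gives roughly $(2r+t)\,2^{-t-1}$; as you note, this is not $<2^{-t}$. Your first proposed repair---replacing the cutoff $2r+t+1$ by $2r+t+c$ for a \emph{fixed} constant $c$---does not work: the tail then becomes about $(2r+t+c)\,2^{-t-c}$, and the polynomial factor still exceeds $2^{c}$ once $r+t$ is large, so no fixed $c$ suffices. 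Your vaguer fallback (letting the cutoff grow with $r+t$, e.g.\ $2r+2t+2$ or $2r+t+\lceil\log_2(r+t+2)\rceil+2$) does work and keeps the computation in time $2^{2^{O(r+t)}}$; note that this amounts to replacing the paper's $\hat d$ by a different approximator, which is legitimate because the lemma only asserts that \emph{some} witness to double exponential time computability exists.

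The clean fix, and what the paper actually does, is to use the sharp bound on the pairing sum: since $\pi$ is a bijection from $\Z^2$ to $\N$, $\sum_{\pi(i,j)<m}2^{-\pi(i,j)}\le\sum_{p\in\N}2^{-p}=2$, a constant independent of $m$. Then the tail beyond $m=2r+t+1$ is at most $4^r\cdot 2\cdot\sum_{m\ge 2r+t+2}2^{-m}=2^{2r+1}\cdot 2^{-2r-t-1}=2^{-t}$, so the cutoff as defined in the paper already suffices; the same geometric bound also gives initial capital at most $4$, slightly cleaner than your $\sum_m m2^{-m}$ estimate, though yours is adequate for finiteness. Your treatment of the martingale identity and of the running time, including the observation that $\pi(i,j)<m$ bounds $|i|$ and $|j|$ so the relevant pairs are easy to enumerate, matches the paper and is fine.
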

\begin{proof}
    The martingale property~\eqref{eq:martingale} is preserved by linear combinations, so to show that $d$ is a martingale we only need to verify that its initial capital $d(Q_0(0,0))$ is finite.
    
    Each $\tilde{d}_{m,\hat{x},j}$ is an open set martingale, so each one's initial capital $\tilde{d}_{m,\hat{x},j}(Q_0(0,0))$ is either 0 or 1. It is immediate from~\eqref{eq:dmij} that each $d_{m,i,j}(Q_0(0,0))$ is therefore at most 1. As $\pi$ is a pairing function, we have
    \begin{align*}
        \sum_{\substack{i,j\in\Z\\\pi(i,j)< m}} 2^{-\pi(i,j)}d_{m,i,j}(Q_0(0,0))
        &\leq \sum_{\substack{i,j\in\Z\\\pi(i,j)< m}} 2^{-\pi(i,j)}\\
        &\leq\sum_{i,j\in\Z} 2^{-\pi(i,j)}\\
        &=\sum_{p\in\N} 2^{-p}\\
        &=2
    \end{align*}
    for all $m$, so
    \begin{align*}
        d(Q_0(0,0))&\leq\sum_{m\in\N}2^{1-m}\\
        &=4.
    \end{align*}
    
    To prove that $d$ is double exponential time computable, it suffices to show, for all $Q_r(u,v)\in\mathcal{Q}$ and all $t\in\N$, that 
    \[\big|d(Q_r(u,v))-\hat{d}(Q_r(u,v))\big|<2^{-t}\]
    and $\hat{d}(Q_r(u,v),t)$ is computable in $2^{2^{O(r+t)}}$ time.

    For the former condition, note that $d_{m,i,j}(Q_r(u,v))\leq 4^r$ by~\eqref{eq:growthbound}. Therefore
    \begin{align*}
        \big|d(Q_r(u,v))-\hat{d}(Q_r(u,v))\big|
        &\leq 4^r\sum_{m=2r+t+2}^\infty2^{-m}\sum_{\substack{i,j\in\Z\\\pi(i,j)< m}} 2^{-\pi(i,j)}\\
        &=2^{2r+1}\sum_{m=2r+t+2}^\infty2^{-m}\\
        &=2^{2r+1}2^{-2r-t-1}\\
        &=2^{-t}.
    \end{align*}
    
    To see that $\hat{d}(Q_r(u,v),t)$ is computable in $2^{2^{O(r+t)}}$ time, observe that computing each of the inner sums in $\hat{d}(Q_r(u,v),t)$ involves computing the martingale value $d_{m,i,j}(Q_r(u,v))$ for exactly $m$ pairs $(i,j)\in\Z^2$. As noted in Section~\ref{ssec:combining}, it follows from Lemma~\ref{lem:fmxj} that each of these martingale values is computable in $2^{2^{O(r+t)}}$ time, so the time to compute $m$ of them, scale each one by $2^{-\pi(i,j)}$, sum the results, and repeat this for $2r+t+2$ values of $m$, is still $2^{2^{O(r+t)}}$.
\end{proof}

\section{Main Theorem}\label{sec:mainthm}
\begin{theorem}\label{thm:main}
    There is a plane set of ee-measure 0 that contains lines in all directions.
\end{theorem}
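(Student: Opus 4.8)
The plan is to take the double exponential time computable martingale $d$ constructed in Lemma~\ref{lem:eemartingale} and show that it succeeds at every point of $\#F$; by the definition of ee-measure $0$ this already gives that $F$ has ee-measure $0$. Since the construction of Section~\ref{sec:precursors} only produces full lines for slopes in $[0,1]$, I will then obtain lines in all directions by taking the union of $F$ with three reflected copies of itself, using that ee-measure $0$ is closed under finite unions and under the relevant reflections.

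To show $\#F\subseteq S^\infty[d]$, fix a point $(x,y)\in F$ and put $i=\lfloor x\rfloor$ and $j=\lfloor y\rfloor$, so that $(x,y)$ lies in $F\cap([i,i+1)\times[j,j+1))$ and its translate $(x-i,y-j)$ lies in $[0,1)^2\cap(F-(i,j))\subseteq\#F$. For every sufficiently large integer $m$---in particular $m\ge\max\{|i|+1,4\}$ and $m>\pi(i,j)$, of which there are infinitely many---Lemma~\ref{lem:dmijr} provides a threshold $r_m\in\N$ with $d_{m,i,j}^{(r)}(x-i,y-j)\ge 2^{m-2}$ for all $r\ge r_m$; this threshold exists because $d_{m,i,j}$ is a finite nonnegative combination of open-set martingales, each eventually constant along $(x-i,y-j)$ by Theorem~\ref{thm:osm}. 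Because every summand defining $d$ is nonnegative and $d_{m,i,j}$ occurs with coefficient $2^{-m-\pi(i,j)}$, for each such $m$ and each $r\ge r_m$ the term indexed by $(m,i,j)$ alone contributes at least
\[
2^{-m-\pi(i,j)}\cdot 2^{m-2}\ =\ 2^{-\pi(i,j)-2}
\]
to $d^{(r)}(x-i,y-j)$. Hence, if $r\ge r_m$ for $N$ admissible values of $m$---which holds for all sufficiently large $r$, each $r_m$ being finite---then $d^{(r)}(x-i,y-j)\ge N\cdot 2^{-\pi(i,j)-2}$. Since $N$ can be taken arbitrarily large, $\limsup_{r\to\infty}d^{(r)}(x-i,y-j)=\infty$, so $(x-i,y-j)\in S^\infty[d]$. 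As $(x,y)\in F$ was arbitrary, $\#F\subseteq S^\infty[d]$, and since $d$ is ee-computable, $F$ has ee-measure $0$.

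The step I expect to be most delicate is exactly the one just described, because \emph{no individual} martingale $d_{m,i,j}$ succeeds at the point: Theorem~\ref{thm:osm} only guarantees that it climbs to a fixed finite value and then stays there. Success of the combined martingale $d$ must therefore come from the fact that, as the resolution $r$ grows, more and more of these plateaus are reached, each adding a fixed positive increment $2^{-\pi(i,j)-2}$ to the running value. The exponential discount $2^{-m}$ in the definition of $d$ is calibrated against the $\Omega(2^m)$ growth of Lemma~\ref{lem:dmijr} precisely so that this increment does not shrink with $m$; getting this balance and the order of the quantifiers on $m$, $r$, and $N$ right is the crux of the argument.

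Finally, to cover all directions let $\sigma_1(x,y)=(y,x)$ and $\sigma_2(x,y)=(x,-y)$, and set $G=F\cup\sigma_1(F)\cup\sigma_2(F)\cup\sigma_2(\sigma_1(F))$. Since $F$ contains a full line of every slope in $[0,1]$, the set $\sigma_1(F)$ contains one of every slope in $[1,\infty]$ (the vertical direction included), $\sigma_2(F)$ one of every slope in $[-1,0]$, and $\sigma_2(\sigma_1(F))$ one of every slope in $[-\infty,-1]$, so $G$ contains a line in every direction of $\R^2$. Double exponential time measure $0$ is closed under finite unions---if $d_A,d_B$ are ee-computable with $\#A\subseteq S^\infty[d_A]$ and $\#B\subseteq S^\infty[d_B]$, then $d_A+d_B$ is an ee-computable martingale and, since $\#(A\cup B)=\#A\cup\#B$, succeeds on $\#(A\cup B)$---and it is preserved by $\sigma_1$ and $\sigma_2$: the map $\sigma_1$ permutes the dyadic squares of $[0,1)^2$ via $Q_r(u,v)\mapsto Q_r(v,u)$ and commutes with the $\#$ operator, so it carries $d$ to the ee-computable martingale $Q_r(u,v)\mapsto d(Q_r(v,u))$ witnessing that $\sigma_1(F)$ has ee-measure $0$, and $\sigma_2$ is handled the same way once the half-open dyadic intervals are matched up, or, alternatively, one simply re-runs the construction of Sections~\ref{sec:precursors}--\ref{sec:mainthm} with the slope range reflected. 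Therefore $G$ is a set of ee-measure $0$ containing lines in all directions, which is Theorem~\ref{thm:main}.
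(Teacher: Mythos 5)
Your proposal is correct and follows essentially the same route as the paper: reduce to showing $F$ itself has ee-measure 0, then use the martingale $d$ of Lemma~\ref{lem:eemartingale} together with Lemma~\ref{lem:dmijr}, with success at each point coming from the accumulation of the $2^{-\pi(i,j)-2}$ plateaus as $r\to\infty$ (your explicit thresholds $r_m$ just spell out what the paper's summed lower bound on the limsup leaves implicit). The only cosmetic difference is that you complete the directions with reflected copies of $F$ and verify closure of ee-measure 0 under those symmetries, where the paper simply invokes three rotated copies.
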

\begin{proof}
    As the union of $F$ with three rotated copies of $F$ contains lines in all directions, it suffices to prove that $F$ has ee-measure 0. By Lemma~\ref{lem:eemartingale} and the definition of ee-measure 0 in Section~\ref{ssec:martingalesmeasure}, it suffices to show that $\#F\subseteq S^\infty[d]$.

    To this end, let $(x',y')\in\#F$. Then there are some $i,j\in\Z$ such that
    \[(x,y):=(x'+i,y'+j)\in F\cap([i,i+1)\times [j,j+1)).\]
    By Lemma~\ref{lem:dmijr}, for all integers $m\geq|i|+1$, for all sufficiently large $r\in\N$, we have
    \[d_{m,i,j}^{(r)}(x',y')\geq 2^{m-2}.\]
    Therefore
    \begin{align*}
        \limsup_{r\to\infty} d^{(r)}(x',y')&\geq\sum_{m=|i|+1}^\infty 2^{-m-\pi(i,j)}2^{m-2}\\
        &=\sum_{m=|i|+1}^\infty 2^{-\pi(i,j)-2}\\
        &=\infty.
    \end{align*}
    As the double exponential time martingale $d$ succeeds on every $(x',y')\in \#F$, we conclude that $F$ has ee-measure 0.
\end{proof}

By the definition of ee-randomness, the following corollary follows immediately.

\begin{corollary}\label{cor:main}
    For every direction $\theta\in[0,2\pi)$, there is a line in $\R^2$ with direction $\theta$ containing no ee-random points.
\end{corollary}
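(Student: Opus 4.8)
The plan is to read the corollary off Theorem~\ref{thm:main} together with the definition of ee-randomness from Section~\ref{ssec:martingalesmeasure}, with essentially no additional work. First I would apply Theorem~\ref{thm:main} to fix a set $G\subseteq\R^2$ that has ee-measure 0 and contains a line in every direction. The definition of ee-randomness says that a point is ee-random exactly when it lies in no set of ee-measure 0; hence the single set $G$ already witnesses that \emph{every} point lying on one of its lines fails to be ee-random.

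Concretely, I would fix an arbitrary direction $\theta\in[0,2\pi)$. Because a line and its reversal determine the same direction, it is enough to handle $\theta$ modulo $\pi$, and that range is precisely what is covered in the proof of Theorem~\ref{thm:main}, where $F$ supplies lines of every slope in $[0,1]$ (directions in $[0,\pi/4]$) and the three rotated copies of $F$ shift this interval by $\pi/4$, $\pi/2$, and $3\pi/4$. So $G$ contains some full line $L$ in direction $\theta$. For any point $p\in L$ we have $p\in G$, and $G$ has ee-measure 0, so $p$ belongs to a set of ee-measure 0 and is therefore not ee-random. Since $\theta$ was arbitrary, the corollary follows.

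I do not anticipate a genuine obstacle here: all the substance is in Theorem~\ref{thm:main}, and the corollary is a one-line unwinding of the definition of ee-randomness. The only point worth stating carefully is that the four copies of $F$ really do realize every direction modulo $\pi$, so that $G$ contains a line in the prescribed direction—but that was already dispatched in the proof of Theorem~\ref{thm:main}.
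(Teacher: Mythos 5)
Your proposal is correct and matches the paper's reasoning: the paper likewise treats this as an immediate unwinding of the definition of ee-randomness applied to the ee-measure-0 set from Theorem~\ref{thm:main}, whose lines in every direction consist entirely of non-ee-random points. Your extra remark about covering all directions via the rotated copies of $F$ is already handled in the proof of Theorem~\ref{thm:main}, so nothing further is needed.
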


\section{Conclusion}
While we restricted our analysis to $\R^2$, higher-dimensional versions of Theorem~\ref{thm:main} and Corollary~\ref{cor:main} also hold. The union of $F\times\R^{n-2}$ with three rotated copies of itself contains lines in all directions in $\R^n$, and using the definitions and techniques of~\cite{LutLut15}, it is routine to show that this product also has ee-measure 0.

It is natural to ask if it is possible to improve on our double exponential time resource bound. Indeed,~\cite{LutLut15} conjectured there are lines in every direction that miss every polynomial time random point. Our effectivization, like those in~\cite{LutLut15}, relies on open set martingales and is fundamentally brute-force in character; our martingale $d$ is estimated by a function $\hat{d}(Q,t)$ that exhaustively iterates through a very large number of polygons. Known lower bounds on the measure of neighborhoods of Kakeya sets \cite{Cord93,Keic99,BisPer17} indicate that any Kakeya set with exponentially small area has a double exponential number of ``pieces.'' This presents a barrier to satisfying~\eqref{eq:dhat} with fewer than double exponentially many pieces, suggesting that improvements on the resource bound---such as showing that a Kakeya set's lineal extension can miss every exponential time random point---would require a different approach that implicitly or explicitly handles many pieces of a Kakeya set in aggregate.

\begin{figure}
\centering
\includegraphics[width=\columnwidth]{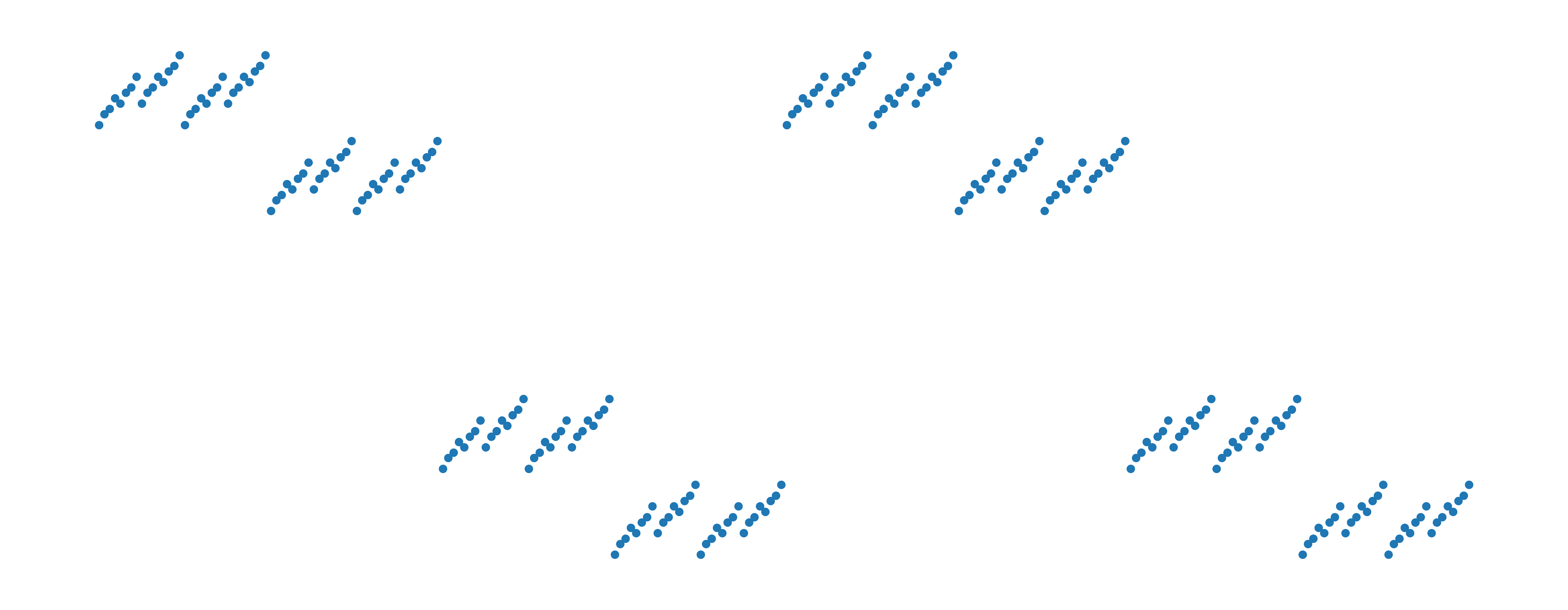}
\caption{The set $\{(a,b_8(a)):a\in(2^{-8}\Z)\cap [0,1)\}.$}
\end{figure}

\begin{figure}
\centering
\includegraphics[width=0.6\columnwidth]{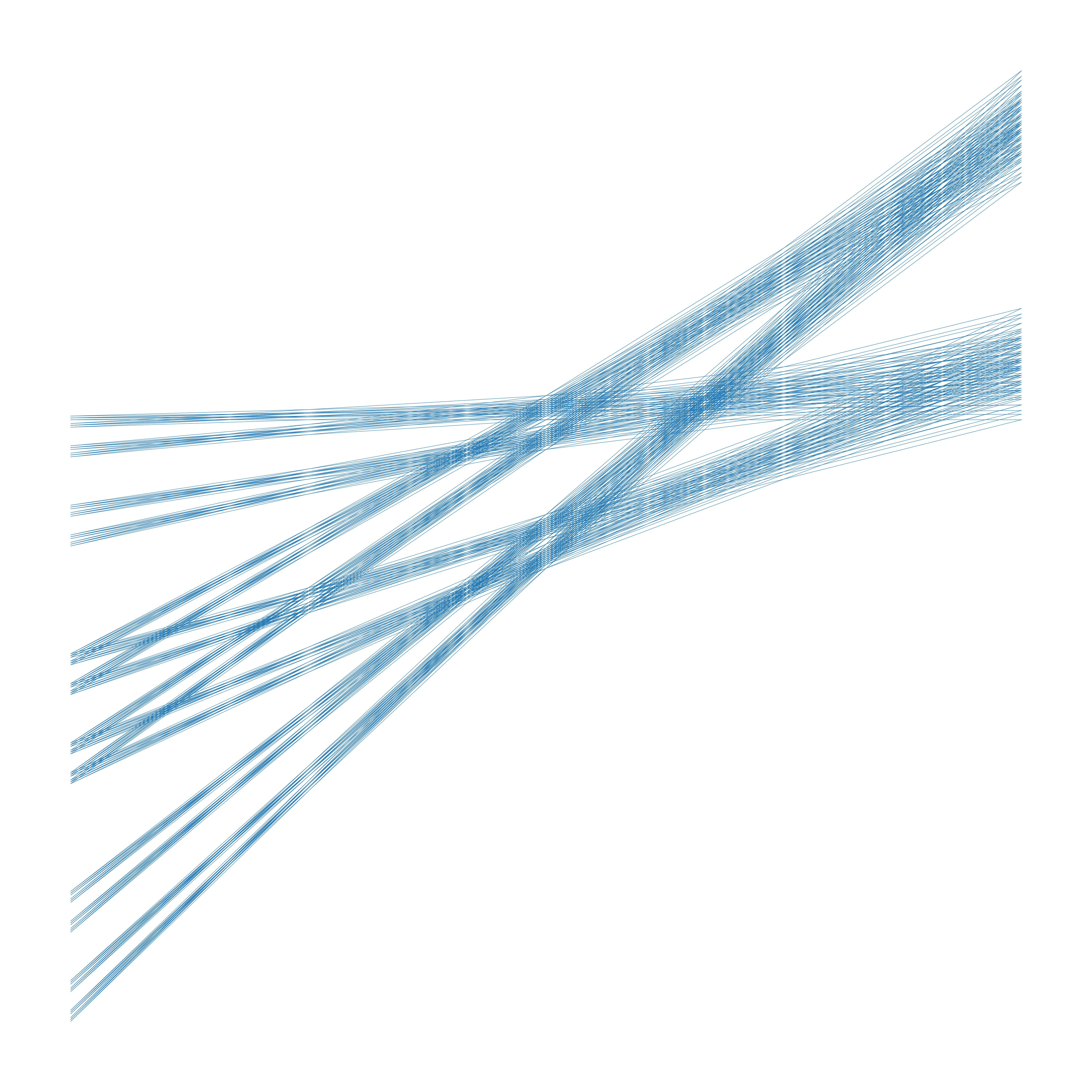}
\caption{The set $F_{8}\cap ([-1,1]\times\R)$.}
\end{figure}

\newpage
\bibliographystyle{abbrv}
\bibliography{leksmerp}

\begin{thebibliography}{10}

\bibitem{Besi19}
A.~S. Besicovitch.
\newblock Sur deux questions d'int{\'e}grabilit{\'e} des fonctions.
\newblock {\em Journal de la Soci{\'e}t{\'e} de physique et de math{\'e}matique de l{'}Universit{\'e} de Perm}, 2:105--123, 1919.

\bibitem{Besi28}
A.~S. Besicovitch.
\newblock On {K}akeya's problem and a similar one.
\newblock {\em Mathematische Zeitschrift}, 27:312--320, 1928.

\bibitem{Besi64}
A.~S. Besicovitch.
\newblock On fundamental geometric properties of plane line sets.
\newblock {\em Journal of the London Mathematical Society}, 39:441--448, 1964.

\bibitem{BisPer17}
C.~J. Bishop and Y.~Peres.
\newblock {\em Fractals in Probability and Analysis}.
\newblock Cambridge University Press, 2017.

\bibitem{BusFie25}
R.~E.~G. Bushling and J.~B. Fiedler.
\newblock Bounds on the dimension of lineal extensions.
\newblock {\em Journal of Fractal Geometry}, 12(1/2):105--133, 2025.

\bibitem{Cord93}
A.~Cordóba.
\newblock The fat needle problem.
\newblock {\em Bulletin of the London Mathematical Society}, 25(1):81--82, 01 1993.

\bibitem{Davi71}
R.~O. Davies.
\newblock Some remarks on the {K}akeya problem.
\newblock {\em Mathematical Proceedings of the Cambridge Philosophical Society}, 69:417--421, 1971.

\bibitem{dBCKO08}
M.~de~Berg, O.~Cheong, M.~van Kreveld, and M.~Overmars.
\newblock {\em Computational Geometry: Algorithms and Applications}.
\newblock Springer-Verlag, Santa Clara, CA, USA, 3rd edition, 2008.

\bibitem{Keic99}
U.~Keich.
\newblock On ${L}^p$ bounds for {K}akeya maximal functions and the {M}inkowski dimension in $\mathbb{R}^2$.
\newblock {\em Bulletin of the London Mathematical Society}, 31(2):213--221, 1999.

\bibitem{Kele16}
T.~Keleti.
\newblock Are lines much bigger than line segments?
\newblock {\em Proc. Amer. Math. Soc.}, 144:1535--1541, 2016.

\bibitem{Lutz92}
J.~H. Lutz.
\newblock Almost everywhere high nonuniform complexity.
\newblock {\em Journal of Computer and System Sciences}, 44(2):220--258, 1992.

\bibitem{LutLut15}
J.~H. Lutz and N.~Lutz.
\newblock Lines missing every random point.
\newblock {\em Computability}, 4(2):85--102, 2015.

\bibitem{LutLut18}
J.~H. Lutz and N.~Lutz.
\newblock Algorithmic information, plane {K}akeya sets, and conditional dimension.
\newblock {\em ACM Transactions on Computation Theory}, 10, 2018.

\bibitem{LutStu20}
N.~Lutz and D.~Stull.
\newblock Bounding the dimension of points on a line.
\newblock {\em Information and Computation}, 275, 2020.

\bibitem{LutStu22}
N.~Lutz and D.~Stull.
\newblock Dimension spectra of lines.
\newblock {\em Computability}, 11(2):85--112, 2022.

\bibitem{Sawy87}
E.~Sawyer.
\newblock Families of plane curves having translates in a set of measure zero.
\newblock {\em Mathematika}, 34(1):69--76, 1987.

\bibitem{Stul25}
D.~M. Stull.
\newblock The dimension spectrum conjecture for planar lines.
\newblock {\em Journal of the London Mathematical Society}, 111(6):e70216, 2025.

\bibitem{Vill39}
J.~Ville.
\newblock {\em \'{E}tude Critique de la Notion de Collectif}.
\newblock Gauthier--Villars, Paris, 1939.

\bibitem{WanZah25}
H.~Wang and J.~Zahl.
\newblock Volume estimates for unions of convex sets, and the {K}akeya set conjecture in three dimensions.
\newblock {\em arXiv preprint arXiv:2502.17655}, Feb 2025.

\end{thebibliography}

\end{document}